\theoremstyle{thmstyleone}%
\newtheorem{theorem}{Theorem}
\theoremstyle{thmstyletwo}%
\newtheorem{remark}{Remark}%
\theoremstyle{thmstylethree}%
\newtheorem{definition}{Definition}%
\newcommand{\equa}[1]{Eq.~(\ref{#1})} 
\newcommand{\equasa}[2]{Eqs.~(\ref{#1}) and (\ref{#2})}
\newcommand{\eqn}[2]{\begin{gather}
#1
\label{#2}
\end{gather}
}
\begin{document}

\title{\bf Temporal to Spatial Instability in a Flow System:\\ A Comparison}


%
\author{{\bf Antonio Barletta}\\[6pt]
{\small Department of Industrial Engineering,}\\
{\small Alma Mater Studiorum Universit\`a di Bologna,}\\
{\small Viale Risorgimento 2, Bologna, 40136, Italy}\\
\texttt{\small antonio.barletta@unibo.it}}

\date{}

\maketitle

%


\abstract{\noindent
The definitions of temporal instability and of spatial instability in a flow system are comparatively surveyed. The simple model of one-dimensional Burgers' flow is taken as the scenario where such different conceptions of instability are described. The temporal analysis of instability stems from Lyapunov's theory, while the spatial analysis of instability interchanges time and space in defining the evolution variable. Thus, the growth rate parameter for temporally unstable perturbations of a basic flow state is to be replaced by a spatial growth rate when a coordinate assumes the role of evolution variable. Finally, the idea of spatial instability is applied to a Rayleigh-B\'enard system given by a fluid-saturated horizontal porous layer with an anisotropic permeability and impermeable boundaries kept at different uniform temperatures.  \\[10pt]
{\bf Keywords:} Temporal Instability, Spatial Instability, Flow System, Porous Medium, Anisotropy
}



\section{Introduction}\label{sec1}
The instability for a stationary solution of the local balance equations which govern the fluid flow is a cornerstone topic of the research in fluid mechanics over the last century. Its intrinsic importance is due to the close connection with the analysis of transitional flow and, ultimately, with turbulence. There is also a fundamental area of research where fluid mechanics is interrelated to heat transfer and convection. Within this area, the instability of a flow system pinpoints the conditions for the onset of convective cellular patterns as those identified in B\'enard experiments \cite{benard1900etude} and modelled by \citet{rayleigh1916lix}. The typical setup, well-known as the Rayleigh-B\'enard system, is a horizontal layer of fluid or fluid-saturated porous medium where the horizontal boundaries are kept isothermal with heating from below \cite{Rees2000, drazin2004hydrodynamic, straughan2004energy, Straughan}. 

The classical strategy for the flow stability analysis is an implementation of Lyapunov's idea of instability for a mechanical system, where perturbation means defining a slight alteration of a given flow at the initial instant of time and, then, monitoring its evolution in time as caused by the dynamics of the system. The stability or instability relies entirely on the time evolution of an initially imposed perturbation. In fluid mechanics, such an approach is termed temporal analysis of the flow instability as opposed to the spatial analysis. The concept behind the spatial instability analysis is monitoring the evolution in space, typically in the streamwise direction, of a persistent perturbation signal localised at a given spatial position along the flow direction. Stated in these terms, the spatial instability interchanges the roles of space and time relatively to the temporal instability. The practical interest of the spatial instability analysis stems from the aerodynamics of jets in the pioneering papers by \citet{betchov1966spatial} and by \citet{keller1973spatial}. In this field, the more recent papers by \citet{alves2007local} and by \citet{afzaal2015temporal} made significant contributions. An outlook into the research carried out in the area of spatially-developing instability can be found in textbooks such as \citet{schmid2012stability}.

The aim of this paper is to survey the comparison between the temporal instability and the spatial instability by providing a simple example where such concepts are exploited: the one-dimensional Burgers' equation with a driving linear force. The framework of spatial instability is applied to a real-world system made of a horizontal fluid-saturated porous layer with an anisotropic permeability bounded by impermeable horizontal walls kept at different uniform temperatures. This Rayleigh-B\'enard system is studied along the steps illustrated in previous papers \cite{barletta2021spatially, barletta2021time} where the special case of an isotropic saturated porous layer has been analysed.

\section{From the Temporal Analysis to the Spatial Analysis}
The classical approach to the study of the linear instability in fluid mechanics, {\em viz.} the instability to small-amplitude perturbations of a given flow, consists in a direct application of Lyapunov's idea: testing the instability of an equilibrium state, {\em i.e.}, of a stationary solution of the governing equations of fluid flow means slightly altering the initial condition at time $t=0$ and monitoring how this change modifies the system evolution at $t>0$. In this framework, instability means a gradually amplifying discrepancy from the original stationary solution as time evolves. Furthermore, in this well-established approach, a perturbation is a disturbance of the initial condition at $t=0$ with the boundary conditions for the system left unchanged.

\begin{figure}[t]%
\centering
\includegraphics[width=0.7\textwidth]{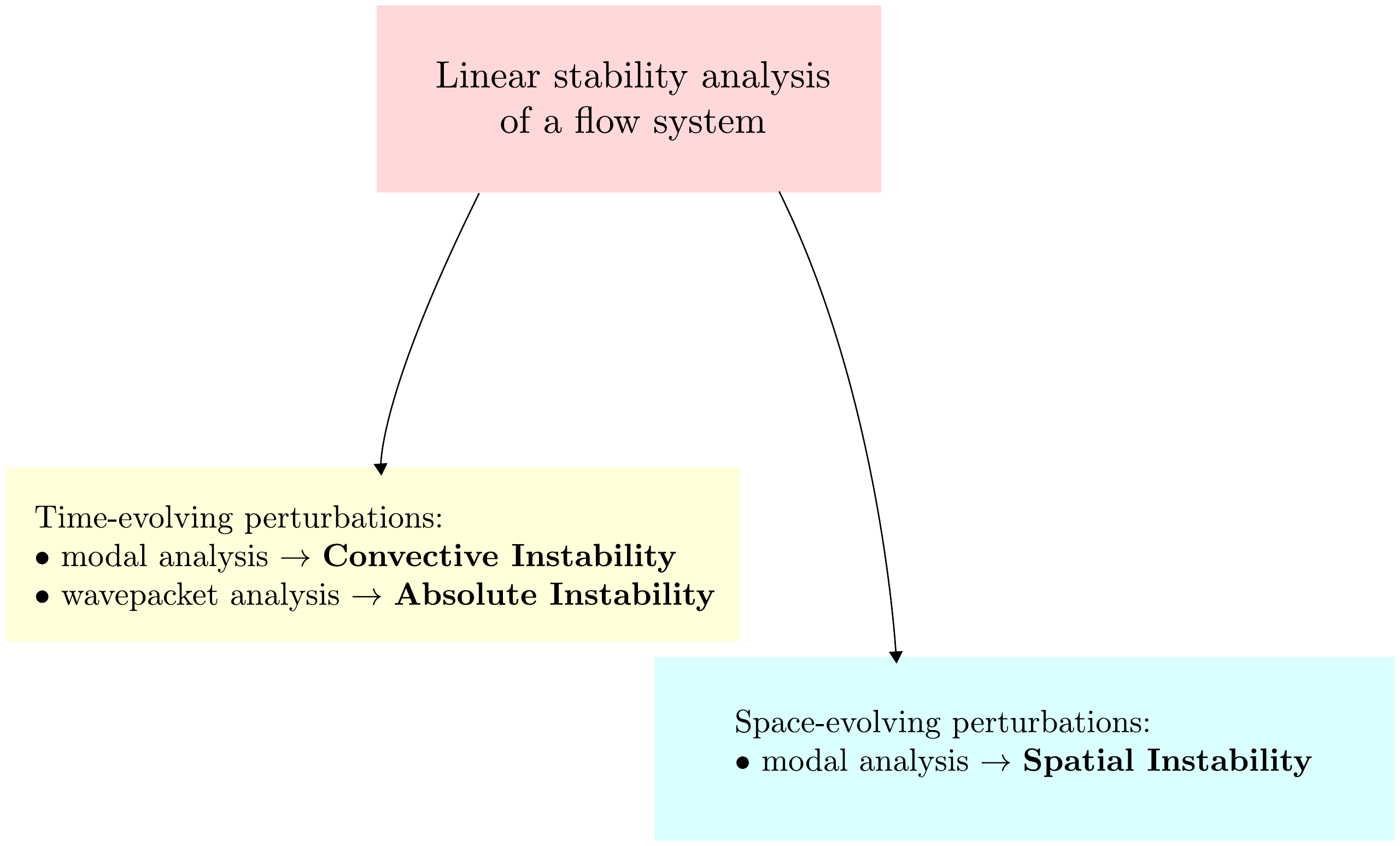}
\caption{The temporal analysis of instability versus the spatial analysis as a comparison between the effects of a time-evolving perturbation and the effects of a space-evolving perturbation}\label{fig1}
\end{figure}

\begin{figure}[h]%
\centering
\includegraphics[width=0.53\textwidth]{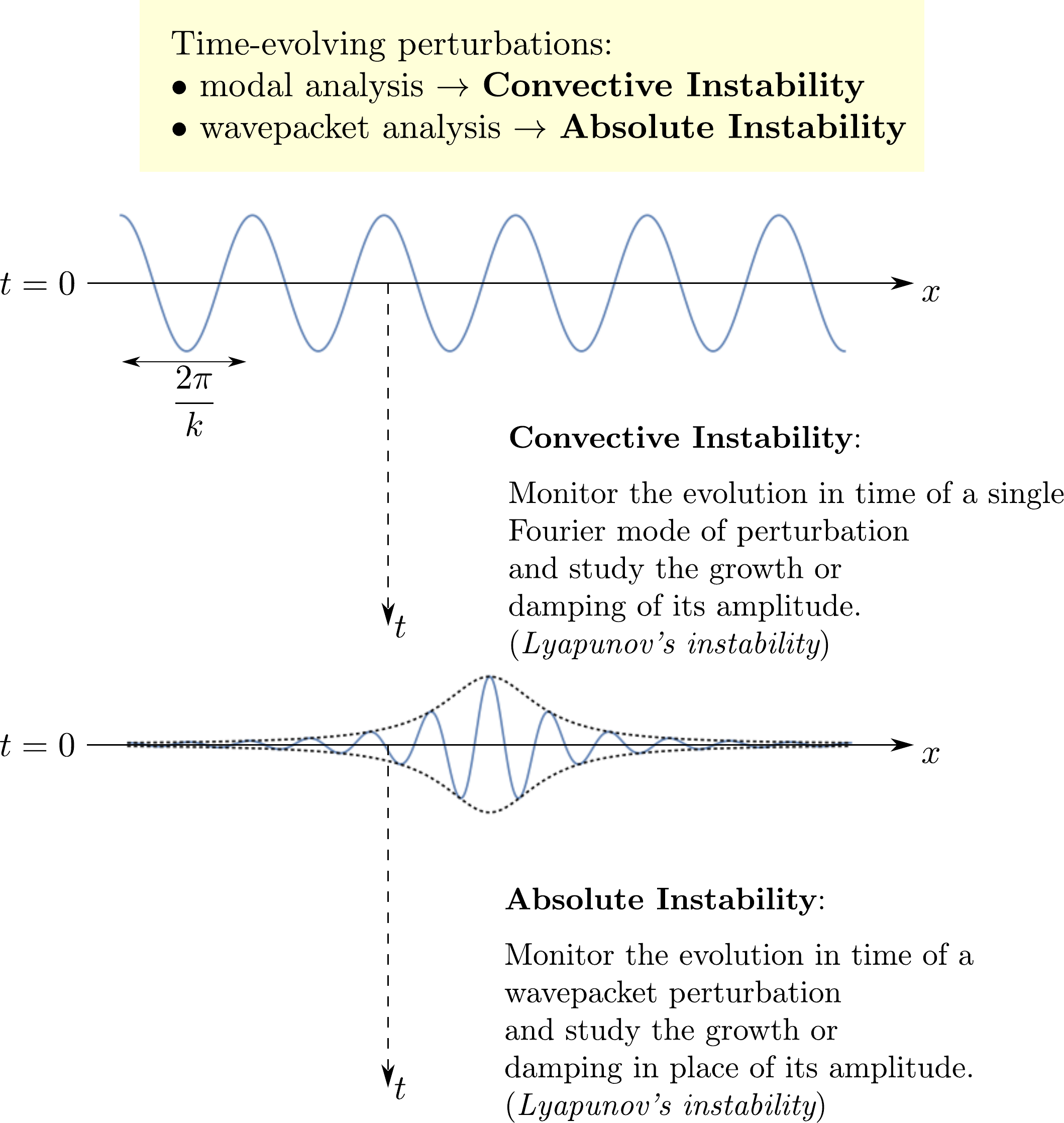}
\caption{The temporal analysis of instability through its two-fold implementation: the convective instability and the absolute instability}\label{fig2}
\end{figure}

Another approach to instability in fluid mechanics, the spatial analysis, marks a sharp difference with respect to the temporal analysis of time-evolving perturbations. In fact, the spatial analysis does not rely on Lyapunov's idea of instability as a way to test the altered time evolution of a system as a consequence of a small change in the initial condition. If we imagine a flow system with a streamwise coordinate $x$, the spatial analysis is meant to test the development along the $x$ direction of a perturbed boundary, or inlet, condition at $x=0$. Then, the aim of the spatial analysis of instability is the determination of the effects produced downstream or upstream of a time-periodic perturbation signal, {\em viz.} a Fourier mode, set at a given position, conventionally at $x=0$. A schematic illustration of the comparison between the analysis of time-evolving perturbations and the analysis of space-evolving perturbations is provided in Fig.~\ref{fig1}.

\begin{figure}[t]%
\centering
\includegraphics[width=0.55\textwidth]{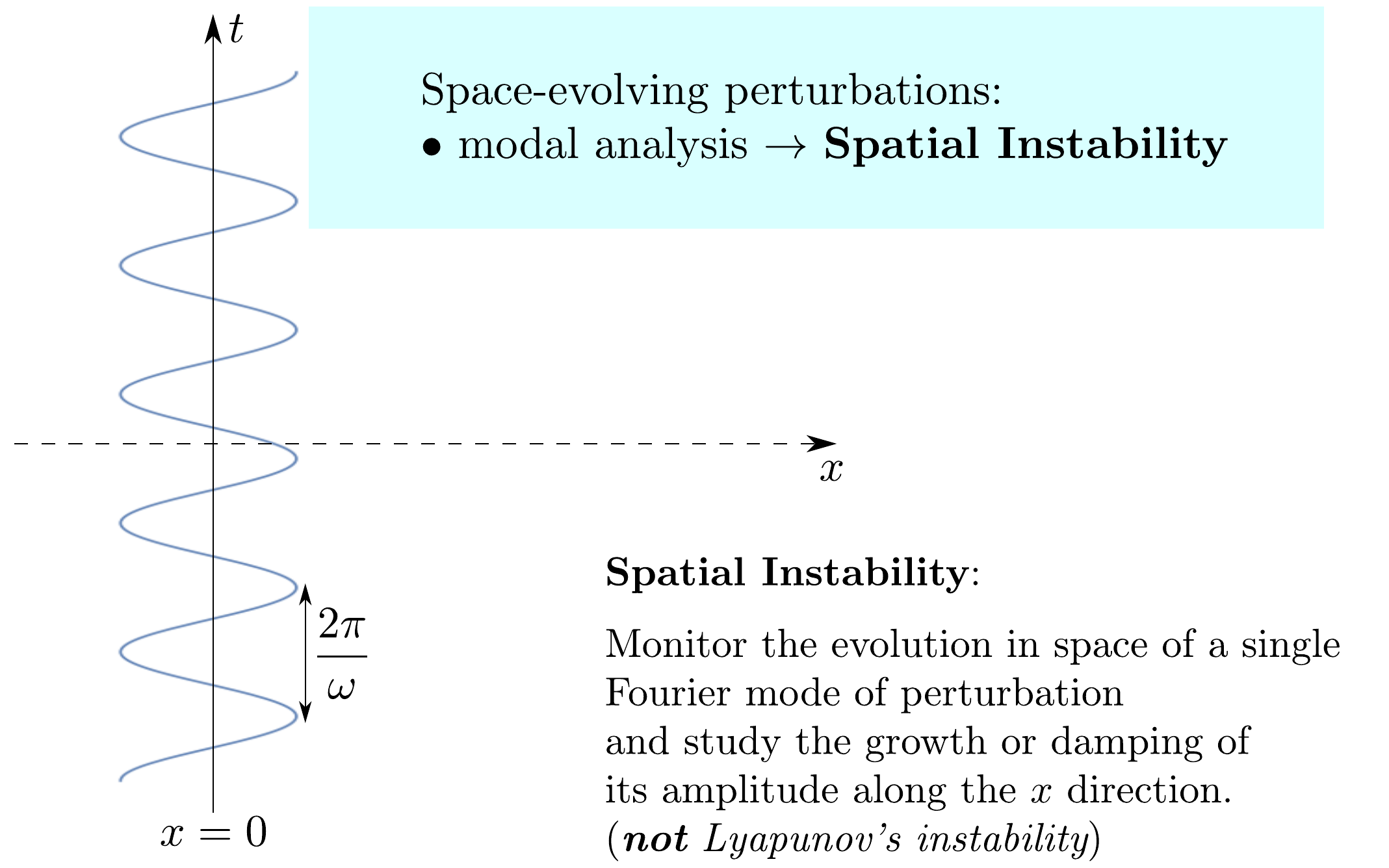}
\caption{The spatial analysis of instability}\label{fig3}
\end{figure}

There are two ways, traditionally employed, for the implementation of the temporal stability analysis and, hence, of the study of time-evolving perturbations: the modal analysis and the wavepacket analysis. The difference relies in the specification of the initial condition set at $t=0$. The modal analysis arises from an initial perturbation expressed as a plane wave with wavenumber $k$ along the $x$ direction, namely a single Fourier mode. On the other hand, the wavepacket analysis is developed by assuming an initial perturbation given by a wavepacket, namely an envelope of infinite plane waves with all possible wavenumbers or, stated differently, a Fourier integral. A scheme of the two approaches to the temporal analysis is shown in Fig.~\ref{fig2}. We mention that the wavepacket analysis leads to the definition of a parametric regime called absolute instability diverse from  the convective instability defined via the modal analysis \cite{barletta2019routes}. 

A sketch illustrating the spatial approach to instability and, hence, the study of space-evolving perturbations is displayed in Fig.~\ref{fig3}. This figure shows the key point of the spatial instability where the response of the flow system is monitored downstream or upstream of a persistent time-periodic perturbation set up at a given spatial position, $x=0$. The time-periodic signal at $x=0$ is, in fact, a Fourier mode having an angular frequency $\omega$.
In Fig.~\ref{fig3}, the already mentioned departure of the spatial instability concept from Lyapunov's idea of instability is also highlighted.

Hereafter, we will focus on the comparison between the temporal analysis and the spatial analysis of instability in their modal formulations. In other terms, the focus will be on convective instability versus spatial instability. On the other hand, no further discussion of the wavepacket dynamics and the absolute instability will be provided here as this topic has been extensively presented in a recent book \cite{barletta2019routes}.

\section{One-Dimensional Burgers' Flow}\label{burflo}
Let us consider a flow system whose dynamics is governed by a one-dimensional Burgers' equation with a driving linear force, namely
\eqn{
\pdv{W}{t} + W \pdv{W}{x} = \pdv[2]{W}{x} + \gamma\ \qty( W - a)  \qc \qty(x,t) \in \mathbb{R}^2,
}{1}
where $\gamma$ and $a$ are two real constants with $a \ne 0$. The assumption $a > 0$ is not restrictive as one can always recover the behaviour for a negative $a$ by applying to \equa{1} the transformation $W \to - W$ with $x \to -x$. Hence, in the following, we will implicitly consider $a > 0$.
Although Burgers' equation is a partial differential equation originally formulated as a toy model for developing turbulence \cite{burgers1948mathematical}, we will not mind about its physical meaning and just use it as an arena for the implementation of the temporal and spatial stability analysis.

Equation~(\ref{1}) admits a simple constant solution,
\eqn{
W(x,t) = a,
}{2}
whose linear instability can be investigated by employing either the temporal analysis or the spatial analysis. In both cases, the starting point is the definition of a small-amplitude perturbation of the equilibrium solution (\ref{2}),
\eqn{
W(x, t) = a + \varepsilon w(x,t),
}{3}
where $\varepsilon$ is a perturbation amplitude parameter. Hence, one can substitute \equa{3} into \equa{1} and neglect terms $\order{\varepsilon^2}$, so that we obtain
\eqn{
\pdv{w}{t} + a \pdv{w}{x} = \pdv[2]{w}{x} + \gamma w .
}{4}

\subsection{Temporal Analysis}\label{temana}
One may employ an $x$-based Fourier transform to solve \equa{4}, namely
\eqn{
\tilde{w}(k,t) = \frac{1}{\sqrt{2 \pi}} \int_{-\infty}^{\infty} w(x,t) e^{-i k x} \dd x 
\qc
w(x,t) = \frac{1}{\sqrt{2 \pi}} \int_{-\infty}^{\infty} \tilde{w}(k,t) e^{i k x} \dd k ,
}{5}
where $k$ has the meaning of a wavenumber. Due to the properties of the Fourier transform for derivatives with respect to $x$, \equasa{4}{5} yield
\eqn{
\pdv{\tilde{w}}{t} = \lambda(k) \tilde{w} , \qfor \lambda(k) = \gamma - k^2 - i k a .
}{6}
Here, the polynomial expression of $\lambda(k)$ is the stability dispersion relation. Furthermore, from \equa{6}, we determine the Fourier transform of the perturbation $w$, namely
\eqn{
\tilde{w}(k,t) = \tilde{w}(k,0) e^{\lambda(k) t} .
}{7}
The analysis of convective instability is modal, which means that the growth or decay in time of the perturbation is assessed for the single Fourier mode which is periodic in the $x$ coordinate with wavelength $2\pi/k$, as sketched in Fig.~\ref{fig2}. The exponential growth or decay of $\tilde{w}(k,t)$ is evidently regulated by the real part of $\lambda(k)$, so that one predicts convective instability, on the basis of \equa{6}, for
\eqn{
\gamma > k^2 .
}{8}
This condition is evidently independent of the constant $a$, while this constant determines the angular frequency $\omega$ of the perturbation wave. In fact, by substituting the expression of $\lambda(k)$ given by \equa{6} into \equa{7}, each single Fourier mode in \equa{5} contains the exponential with imaginary argument
\[
e^{i \qty(k x - \omega t)} \qfor \omega = k a.
\]
Deciding whether the Fourier integral expressing $w(x,t)$ unboundedly grows for large times $t$, at a fixed position $x$, is a different matter which does rely on the steepest-descent approximation of the integral \cite{barletta2019routes}. The answer to this question leads to the condition of absolute instability. We refer the reader to \citet{barletta2019routes} for details on this point. We just mention that satisfying \equa{8} means having an unbounded growth of some Fourier modes, {\em i.e.} those with $|k| < \sqrt{\gamma}$, but this does not imply in general an unbounded growth at large times of the Fourier integral, \equa{5}, expressing $w(x,t)$.  

\subsection{Spatial Analysis}\label{spatana}
One can use a $t$-based Fourier transform to solve \equa{4}, namely
\eqn{
\hat{w}(x, \omega) = \frac{1}{\sqrt{2 \pi}} \int_{-\infty}^{\infty} w(x,t) e^{i \omega t} \dd t  \qc 
w(x,t) = \frac{1}{\sqrt{2 \pi}} \int_{-\infty}^{\infty} \hat{w}(x,\omega) e^{-i \omega t} \dd \omega .
}{9}
On account of the properties of the Fourier transform for the derivative with respect to $t$, \equasa{4}{9} yield
\eqn{
\pdv[2]{\hat{w}}{x} - a \pdv{\hat{w}}{x} + \left( \gamma + i \omega \right) \hat{w} = 0 .
}{10}
The solution of the linear ordinary differential equation at constant coefficients, \equa{10}, is given by
\eqn{
\hat{w}(x,\omega) = c_{+}(\omega) e^{\eta_{+}(\omega)  x} + c_{-}(\omega) e^{\eta_{-}(\omega) x},  
}{11}
where $c_{+}(\omega)$ and $c_{-}(\omega)$ are integration constants determined by the inlet conditions prescribed at $x=0$, while $\eta_{+}(\omega)$ and $\eta_{-}(\omega)$ are the roots of the quadratic equation 
\eqn{
\eta^2 - a \eta + \gamma + i \omega = 0 .
}{12}
%
Let us denote with $s$ and $k$ the real and the imaginary parts of $\eta$, respectively,
\eqn{
\eta = s + i k.
}{14}
Thus, each Fourier mode in the integral representation of $w(x,t)$, \equa{9}, contains the product of two exponential factors
\eqn{
e^{s x} e^{i \qty(k x - \omega t)} .
}{15}
In fact, the product (\ref{15}) involves an amplifying/damping factor, $e^{s x}$. Along the positive $x$ direction, we get amplification when $s > 0$ and damping when $s  < 0$. The other factor in the product (\ref{15}) is the same as that encountered on carrying out the temporal analysis. It defines a travelling wave along the $x$ direction with phase velocity $\omega/k$. Hence, the sign of $\omega/k$ serves to determine whether the wave travels in the positive $(\omega/k > 0)$ or in the negative $(\omega/k < 0)$ direction of the $x$ axis. The difference with respect to the convective instability is apparent. In the temporal framework, we have a one way evolution along the positive direction of $t$, due to the causality principle. In the spatial framework, the evolution can be in the positive $x$ direction or in the negative $x$ direction. Therefore, to assess the spatial instability of a Fourier mode, the sign of the spatial growth rate $s$ is not sufficient, as we also need to know the direction of the $x$ axis where the wave is heading and, hence, the sign of $\omega/k$. One can define the spatial instability by checking if a given Fourier mode grows or decays in the direction of the $x$ axis where this wave is travelling.
\begin{definition}\label{def1}
A Fourier mode with angular frequency $\omega$ and $k \ne 0$ is spatially unstable when
\eqn{
\frac{s \omega}{k} > 0,
}{16}
where $s$ and $k$ are real solutions of
\eqn{
\begin{cases}
s^2 - k^2 - a s + \gamma = 0 \\
2 k s - a k + \omega = 0 
\end{cases} .
}{17}
\end{definition}
It must be noted that \equa{17} is obtained by substituting \equa{14} into \equa{12}. Moreover, for every $\omega$, there are two pairs $(s, k)$ satisfying \equa{17} as there are two complex roots of \equa{12}, {\em i.e.} $\eta_{+}(\omega)$ and $\eta_{-}(\omega)$. We also note that \equasa{16}{17} are left invariant by the transformation
\eqn{
\qty(\omega, s, k) \to \qty(-\omega, s, - k).
}{18}
Hence, the assumption $\omega \ge 0$ is not restrictive as we can always use \equa{18} to extend our considerations to negative angular frequencies.  Thus, we can reformulate the condition for spatial instability expressed by Definition \ref{def1}. In fact, one may focus on the case $\omega \ge 0$ by recognising, from \equa{14}, that $\Im\!\qty(\eta^2) = 2 s k$, where $\Im$ denotes the imaginary part of a complex number.

\begin{figure}[t]%
\centering
\includegraphics[width=0.7\textwidth]{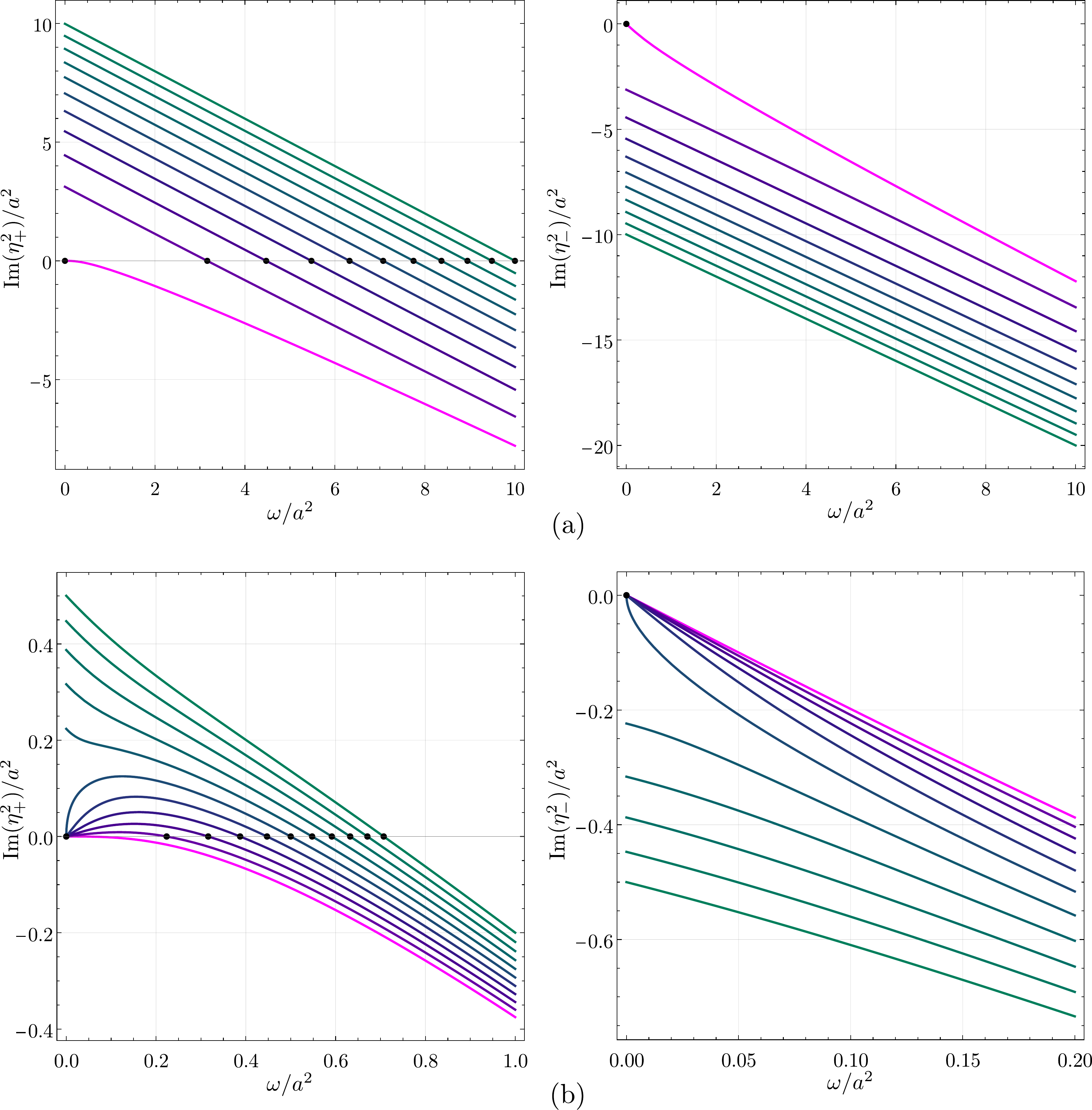}
\caption{Burgers' flow: $\Im\!\qty(\eta^2)/a^2$ versus $\omega/a^2$ for the roots $\eta_{+}$ and $\eta_{-}$ with different values of $\gamma/a^2$: (a) ranging from $0$ to $100$ in steps of $10$; (b) ranging from $0$ to $0.5$ in steps of $0.05$. Both in (a) and in (b), the plot colour continuously changes from cyan $(\gamma/a^2 = 0)$ to green. The black dots denote the zeros for $\omega^2 = \gamma a^2$}\label{fig4}
\end{figure}

\begin{remark}\label{rem1}
A Fourier mode with angular frequency $\omega > 0$ is spatially unstable when
\eqn{
\Im\!\qty(\eta^2) > 0,
}{19}
where $\eta$ is a complex root of \equa{12}.
\end{remark}
Figure~\ref{fig4} shows that only the $\eta_{+}$ root of \equa{12} may satisfy the condition for spatial instability (\ref{19}). Indeed, zeros of $\Im\!\qty(\eta^2) = 2 s k$ are possible for $k=0$ with $s \ne 0$, which yields $\omega=0$ on account of \equa{17}, or for $s=0$ with $k \ne 0$. By employing \equa{17}, one may infer that the latter case leads to  $\omega = a k$ and $\omega^2 = \gamma a^2$. Such values of $\omega$ are denoted with black dots in Fig.~\ref{fig4}.
Frames (b) in Fig.~\ref{fig4} just evidence the small-$\omega$ and small-$\gamma$ parametric region, whereas frames (a) are relative to a significantly larger parametric region.  

An important finding is that $s=0$ defines the subset of the spatial Fourier modes involved in the integral transform (\ref{9}) that is included in the set of temporal Fourier modes discussed in Section~\ref{temana}. More precisely, the spatial modes with $s=0$ coincide with the neutrally stable temporal modes, {\em i.e.} those modes having both a zero spatial growth rate and a zero temporal growth rate.

Equation~(\ref{17}) is the starting point for determining the parametric condition for the onset of the spatial instability.

\begin{theorem}\label{th1}
Spatially unstable modes with $\omega > 0$ can exist only if
\eqn{
\gamma > \frac{\omega^2}{a^2} .
}{20}
\end{theorem}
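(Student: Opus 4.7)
My plan is to work entirely from Remark~\ref{rem1} together with the system~(\ref{17}), treating the spatial-instability condition $\Im(\eta^2)=2sk>0$ as a constraint on the real pair $(s,k)$ and then showing that the only way to satisfy both equations of~(\ref{17}) with $sk>0$ forces $\gamma>\omega^2/a^2$.

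The key first move is to solve the linear second equation of~(\ref{17}) for $k$: since $\omega>0$, we must have $s\neq a/2$, and then $k = \omega/(a-2s)$. This lets me rewrite the instability condition $sk>0$ as $s/(a-2s)>0$, which by sign analysis is equivalent to $s\in(0,a/2)$. So the whole problem reduces to studying a single real variable $s$ in this open interval.

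Next I substitute $k = \omega/(a-2s)$ into the first equation of~(\ref{17}) and solve for $\gamma$, obtaining
\[
\gamma \;=\; s(a-s) \,+\, \frac{\omega^2}{(a-2s)^2} \;=:\; f(s).
\]
Thus spatial instability at frequency $\omega$ occurs if and only if $\gamma$ lies in the range of $f$ on $(0,a/2)$. A direct derivative computation gives
\[
f'(s) \;=\; (a-2s) \,+\, \frac{4\omega^2}{(a-2s)^3},
\]
which is strictly positive on $(0,a/2)$ because both summands are positive there. Hence $f$ is strictly increasing, with $f(0^+)=\omega^2/a^2$ and $f(s)\to+\infty$ as $s\to(a/2)^-$. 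The range is therefore the open interval $(\omega^2/a^2,\infty)$.

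Combining these steps, the existence of a spatially unstable Fourier mode at frequency $\omega>0$ forces $\gamma>\omega^2/a^2$, which is precisely the claim of Theorem~\ref{th1}. I do not expect a serious obstacle; the only subtlety is justifying that one can rule out $s=0$ and $s=a/2$ cleanly (the former gives $sk=0$, violating the strict inequality in~(\ref{16}), and the latter is incompatible with~(\ref{17}) when $\omega>0$), together with checking the sign of $f'$ carefully so that monotonicity, and hence the infimum $\omega^2/a^2$, is obtained rigorously.
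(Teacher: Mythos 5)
Your proof is correct and follows essentially the same route as the paper: you solve the second equation of (\ref{17}) for $k$, substitute into the first to express $\gamma$ as a function of a single real variable, and conclude by a monotonicity argument with boundary value $\omega^2/a^2$; the paper does exactly this using $r=\Im(\eta^2)=2sk$ as the variable where you use $s$ (constrained to $(0,a/2)$). Your version additionally yields sufficiency, since you identify the full range of $f$ as $(\omega^2/a^2,\infty)$, but the underlying argument is the same.
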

\begin{proof}
For the sake of brevity, we use the notation
\eqn{
r = \Im\!\qty(\eta^2) = 2 s k . 
}{21}
Then, the second \equa{17} yields
\eqn{
k = \frac{r + \omega}{a} .
}{22}
Substitution into the first \equa{17} leads to
\eqn{
\gamma = \frac{a^4 r (r+2 \omega )+4 (r+\omega )^4}{4 a^2 (r+\omega )^2}.
}{23}
For every given $a$ and $\omega > 0$, the right hand side of \equa{23} can be considered as a function of $r$. Then, we can define 
\eqn{
Y(r) = \frac{a^4 r (r+2 \omega )+4 (r+\omega )^4}{4 a^2 (r+\omega )^2},
}{24}
whose derivative is given by
\eqn{
Y'(r) = \frac{a^4 \omega ^2+4 (r+\omega )^4}{2 a^2 (r+\omega )^3} .
}{25}
Thus, \equa{25} allows one to conclude that $Y'(r) > 0$ for every $r > - \omega$. As a consequence, $Y(r)$ is a monotonic increasing function of $r$ for $r \ge 0$. Thus, on account of \equasa{23}{24}, we conclude that 
\eqn{
r > 0 \quad \Longrightarrow \quad \gamma = Y(r) > Y(0) = \frac{\omega^2}{a^2} .
}{26}
\end{proof}
Theorem~\ref{th1} reveals that, according to the linear analysis, the spatial instability region in the $(\omega, \gamma)$ plane, for $\omega > 0$, is equivalent to the temporal instability region in the $(k, \gamma)$ plane. In fact, the instability region lies in every case above the neutral stability curve which is equivalently given by $\gamma = \omega^2/a^2$ or by $\gamma = k^2$, as a consequence of the equality $\omega = a k$ valid for $s=0$. However, one must bear in mind that the temporal instability and the spatial instability manifest themselves in different manners as, in the former case, the perturbation modes grow exponentially in time and, in the latter case, the perturbation modes grow exponentially in space along their direction of propagation.

Frames (b) of Fig.~\ref{fig4} show that, when $\omega \to 0$, $\Im(\eta^2) \ne 0$ only if $\gamma > a^2/4$. In particular, one has $\Im(\eta^2) = \pm a (\gamma - a^2/4)^{1/2}$ with a spatial growth rate $s = a/2$.
A comment on the possibility to satisfy \equa{17} with $k = 0$, which yields $\Im(\eta^2) = 0$, and $\omega = 0$, is definitely important. Indeed, this case identifies a type of time-independent modes where the spatial growth rate $s$ is a root of 
\eqn{
s^2 - a s + \gamma = 0 .
}{19b}
Real roots of \equa{19b} exist only for $\gamma \le a^2/4$,
\eqn{
s = \frac{a \pm \sqrt{a^2 - 4 \gamma}}{2}.
}{19c}
With a positive $\gamma \le a^2/4$, both roots of \equa{19b} are positive. They yield perturbation modes undergoing a purely exponential growth along the positive $x$ direction. With a negative $\gamma$, \equa{19c} yields a positive and a negative $s$ meaning a growing perturbation mode along the positive $x$ direction and a growing perturbation mode along the negative $x$ direction.
The exponential growth in $|x|$ of such time-independent modes defines a growing departure from the basic equilibrium state as $|x|$ increases. In this sense, the spatial modes with $k=0$ and $\omega=0$ can be classified as spatially unstable even if they do not satisfy \equa{19}, whereas the left hand side of \equa{16} is actually undefined. If one accepts this conception of spatial instability, then spatially unstable modes may exist in a parametric domain $(\gamma < 0)$ where temporal instability is not possible according to a linear analysis.

\section{Horizontal Anisotropic Porous Layer}\label{anpola}
Let us consider a horizontal porous layer with thickness $L$ and infinite horizontal width saturated by a Newtonian fluid. We assume a two-dimensional flow field in the $(x,y)$ plane (see Fig.~\ref{fig5}) with seepage velocity $\vb{u} = \qty(u_x, u_y)$ and temperature $T$. Heating is supplied from below through impermeable and isothermal boundaries kept at different uniform temperatures, $T_1$ and $T_2$. 

The porous material has a uniform, but anisotropic, permeability with the permeability tensor having principal axes along the $x$ and $y$ directions,
\eqn{
\vb{K} = \mqty(K_x & 0 \\ 0 & K_y) .
}{27}

\begin{figure}[t]%
\centering
\includegraphics[width=0.7\textwidth]{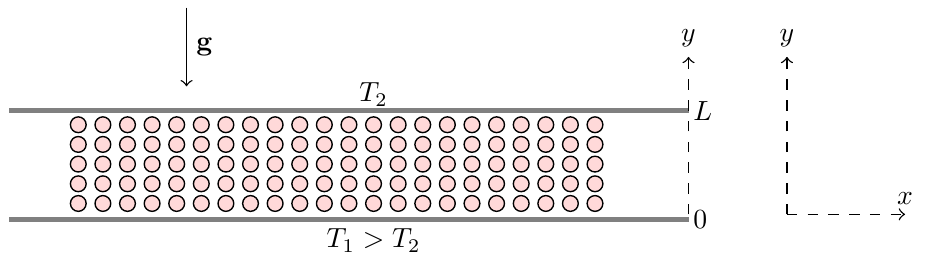}
\caption{Fluid saturated porous layer: sketch of the coordinate system and boundary conditions}\label{fig5}
\end{figure}

\subsection{The Governing Equations}
The momentum transfer is modelled according to Darcy's law \cite{Rees2000, Straughan, NieldBejan2017} and the Boussinesq approximation is claimed in order to model the thermal buoyancy force induced by the non-uniform temperature field. In a dimensionless formulation, the system of local mass, momentum and energy balance equations is given by \cite{Rees2000, Straughan, NieldBejan2017},
\eqn{
\pdv{u_{x}}{x} + \pdv{u_{y}}{y} = 0 ,
\nonumber\\
\pdv{u_{x}}{y} - \tau \pdv{u_{y}}{x} + R \pdv{T}{x} = 0 ,
\nonumber\\
\pdv{T}{t} + u_x \pdv{T}{x} + u_y \pdv{T}{y} = \pdv[2]{T}{x} + \pdv[2]{T}{y} , 
}{28}
where the local momentum balance equation is written in its vorticity formulation, so that the dependence on the pressure field is encompassed. Furthermore, the permeability ratio $\tau$ and the Rayleigh number $R$ are defined as
\eqn{
\tau = \frac{K_x}{K_y} \qc R = \frac{g \beta \qty(T_1 - T_2) K_x L}{\nu \alpha} .
}{29}
Here, $\alpha$, $\beta$ and $\nu$ are the average thermal diffusivity of the saturated porous medium, the thermal expansion coefficient of the fluid and the kinematic viscosity of the fluid, respectively. The gravitational acceleration $\vb{g}$ has a modulus $g$. Due to their definitions, both parameters $\tau$ and $R$ are to be considered as positive.

In order to obtain \equa{28}, we denote with $\sigma$ the ratio between the heat capacity of the saturated porous medium and that of the fluid, while the governing balance equations are made dimensionless by employing the constants
\eqn{
L \qc \frac{L^2 \sigma}{\alpha} \qc \frac{\alpha}{L} \qc T_1 - T_2 ,
}{30}
to scale the coordinates, time, velocity and temperature, respectively. More precisely, the dimensionless temperature is defined as the ratio between $T - T_2$ and the constant $T_1 - T_2$. 

The dimensionless governing equations (\ref{28}) can be reformulated by employing the streamfunction $\Psi$ defined as
\eqn{
u_x = \pdv{\Psi}{y} \qc u_y = - \pdv{\Psi}{x} .
}{31}
In fact, we have
\eqn{
\tau \pdv[2]{\Psi}{x} + \pdv[2]{\Psi}{y} + R \pdv{T}{x} = 0 ,
\nonumber\\
\pdv{T}{t} + \pdv{\Psi}{y} \pdv{T}{x} - \pdv{\Psi}{x} \pdv{T}{y} = \pdv[2]{T}{x} + \pdv[2]{T}{y} .
}{32}
The boundary planes $y=0$ and $y=1$ are considered impermeable and isothermal, namely
\eqn{
\pdv{\Psi}{x} = 0 \qc T = 1 \qfor y = 0,
\nonumber\\
\pdv{\Psi}{x} = 0 \qc T= 0 \qfor y = 1.
}{33}

\subsection{Basic Solution}
A stationary solution of \equasa{32}{33} describing the basic equilibrium state is given by
\eqn{
\Psi = 0 \qc T = 1 - y .
}{34}
%

\subsection{Linearised Perturbation Dynamics}
We define the small-amplitude streamfunction and temperature perturbations of the stationary solution (\ref{34}) as
\eqn{
\Psi(x,y,t) = \varepsilon \psi(x,y,t) \qc T(x,y,t) = 1 - y + \varepsilon \theta(x,y,t) ,
}{36}
where $\varepsilon$ is the perturbation amplitude parameter. Linearisation is carried out by substituting \equa{36} into \equasa{32}{33} and neglecting terms $\order{\varepsilon^2}$. Thus, we can write
\eqn{
\tau \pdv[2]{\psi}{x} + \pdv[2]{\psi}{y} + R \pdv{\theta}{x} = 0 ,
\nonumber\\
\pdv{\theta}{t} + \pdv{\psi}{x} = \pdv[2]{\theta}{x} + \pdv[2]{\theta}{y} ,
\nonumber\\
\pdv{\psi}{x} = 0 \qc \theta = 0 \qfor y = 0, 1.
}{37}
Fourier series in $y$ can be employed, so that the solution of \equa{37} is expressed as
\eqn{
\psi(x,y,t) = \sum_{n=1}^{\infty} \Phi(n,x,t) \sin\!\qty(n \pi y) \qc \theta(x,y,t) = \sum_{n=1}^{\infty} \Theta(n,x,t) \sin\!\qty(n \pi y) .
}{38}
We can now rewrite \equa{37} as
\eqn{
\tau \pdv[2]{\Phi}{x} - n^2\pi^2 \Phi + R \pdv{\Theta}{x} = 0 ,
\nonumber\\
\pdv{\Theta}{t} + \pdv{\Phi}{x} = \pdv[2]{\Theta}{x} - n^2 \pi^2 \Theta .
}{39}

\subsection{Temporal Analysis}
We use the $x$-based Fourier transform defined by \equa{5}, so that \equa{39} is transformed into
\eqn{
\qty(k^2 \tau + n^2\pi^2) \tilde{\Phi} - i k R \tilde{\Theta} = 0 ,
\nonumber\\
\pdv{\tilde{\Theta}}{t} + i k \tilde{\Phi} = - \qty(k^2 + n^2 \pi^2) \tilde{\Theta} .
}{40}
In particular, \equa{40} yields
\eqn{
\pdv{\tilde{\Theta}}{t} + \qty( k^2 + n^2 \pi^2 - \frac{k^2 R}{k^2 \tau + n^2\pi^2} ) \tilde{\Theta} = 0 ,
}{41}
whose solution is
\eqn{
\tilde{\Theta}(n,k,t) = \tilde{\Theta}(n,k,0) e^{\lambda(n, k) t} \qfor \lambda(n, k) = \frac{k^2 R}{k^2 \tau + n^2\pi^2} - k^2 - n^2 \pi^2 .
}{42}
The convective instability arises when the real part of $\lambda(n, k)$ is positive, namely for
\eqn{
R > \frac{\qty(k^2 \tau + n^2 \pi^2) \qty(k^2 + n^2 \pi^2)}{k^2} .
 }{43}
The absolute minimum of $R$ for the onset of the convective instability is obtained with the $n=1$ modes and with the critical values \cite{castinel1974critere, ReesPostelnicu2001, capone2009anisotropy, straughan2019anisotropic, celli2022effects}
 \eqn{
 k = k_c =  \frac{\pi }{\tau^{\sfrac{1}{4}}} \qc R = R_c = \pi ^2 \qty(1 + \sqrt{\tau}\,)^2 .
 }{44}
 The neutral stability curve in the $(k, R)$ plane is the plot of the function of $k$ defined by the right hand side of \equa{43} with $n=1$. As such, it depends on the permeability ratio $\tau$.

 \subsection{Spatial Analysis}
The alternative to the temporal analysis is the use of the $t$-based Fourier transform (\ref{9}). By this method, \equa{39} is transformed to
\eqn{
\tau \pdv[2]{\hat{\Phi}}{x} - n^2\pi^2 \hat{\Phi} + R \pdv{\hat{\Theta}}{x} = 0 ,
\nonumber\\
- i \omega \hat{\Theta} + \pdv{\hat{\Phi}}{x} = \pdv[2]{\hat{\Theta}}{x} - n^2 \pi^2 \hat{\Theta} .
}{46}
Equation~(\ref{46}) is a system of ordinary differential equations in the independent variable $x$. It's solution can be expressed as
\eqn{
\hat{\Phi}(n,x,\omega) = R \sum_{j=1}^4 \frac{\eta_j(n,\omega) C_j (n, \omega) e^{\eta_j(n,\omega) x}}{n^2 \pi^2 - \tau \qty[\eta_j(n,\omega)]^2}  ,
\nonumber\\
\hat{\Theta}(n,x,\omega) = \sum_{j=1}^4 C_j (n, \omega) e^{\eta_j(n,\omega) x} ,
}{47}
where $\eta_j (n,\omega)$, with $j = 1, \ldots, 4$, coincides with a root $\eta$ of the fourth-degree equation
\eqn{
\qty(n^2 \pi^2 - \tau \eta^2) \qty(n^2 \pi^2 - \eta^2 - i \omega ) + R  \eta^2 = 0 ,
}{48}
while $C_j (n, \omega)$ are coefficients to be determined on the basis of the conditions prescribed at $x=0$ for the perturbations $\psi$ and $\theta$.

The modal stability analysis can be scaled down to the case $n=1$ by recognising that the transformation defined by 
\eqn{
\frac{\eta}{n} \to \eta \qc \frac{\omega}{n^2} \to \omega \qc \frac{R}{n^2} \to R ,
}{49}
maps \equa{48} into its $n=1$ version, namely
\eqn{
\qty(\pi^2 - \tau \eta^2) \qty(\pi^2 - \eta^2 - i \omega ) + R  \eta^2 = 0 .
}{50}
Hence, our forthcoming discussion will rely on \equa{50}.
 
\begin{figure}[t]%
\centering
\includegraphics[width=0.7\textwidth]{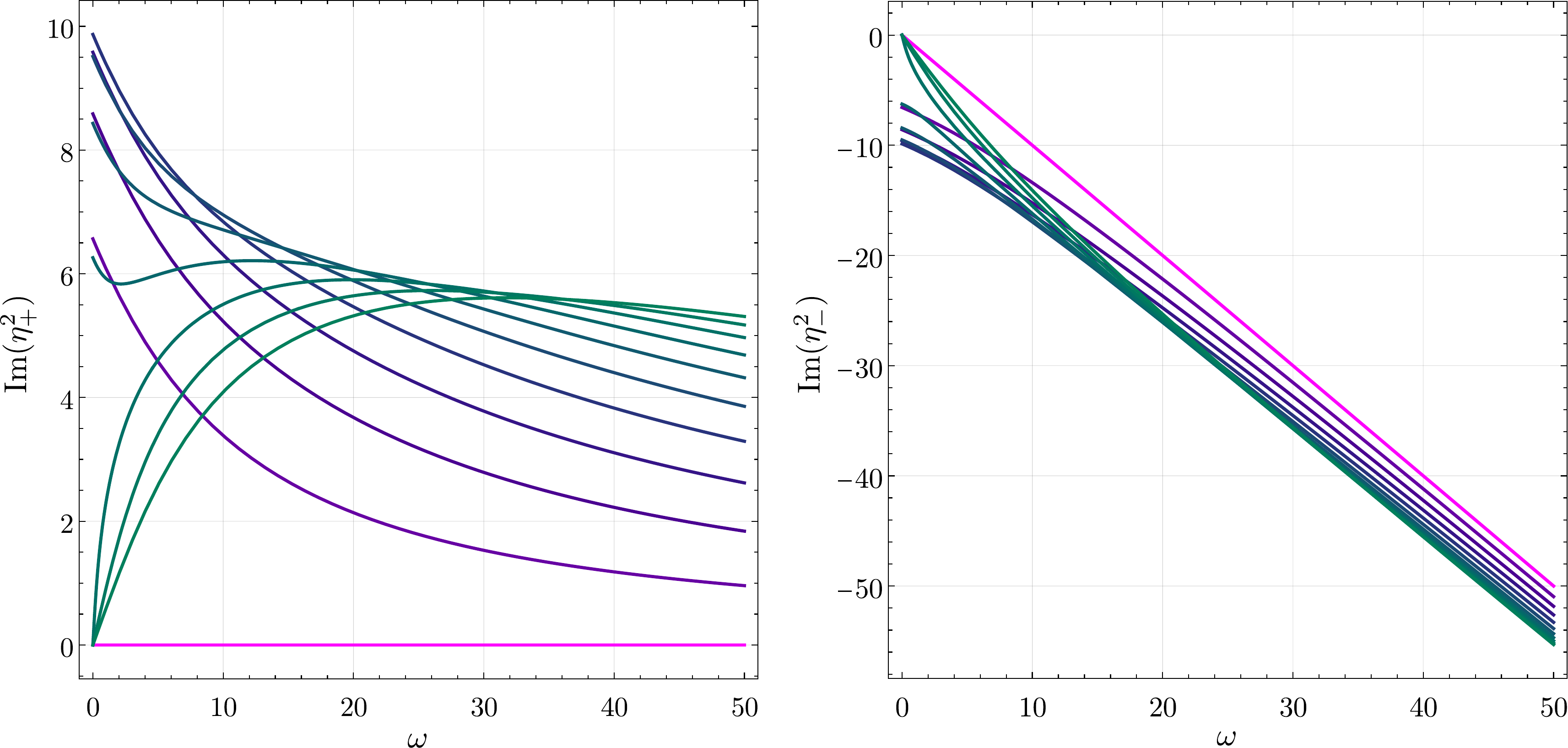}
\caption{Porous medium with $\tau = 1$: $\Im\!\qty(\eta^2)$ versus $\omega$ for the roots $\eta_{+}^2$ and $\eta_{-}^2$ with different values of $R$ ranging from $0$ to $50$ in steps of $5$ with the plot colour continuously changing from cyan $(R=0)$ to green}\label{fig6}
\end{figure}

\begin{figure}[t]%
\centering
\includegraphics[width=0.7\textwidth]{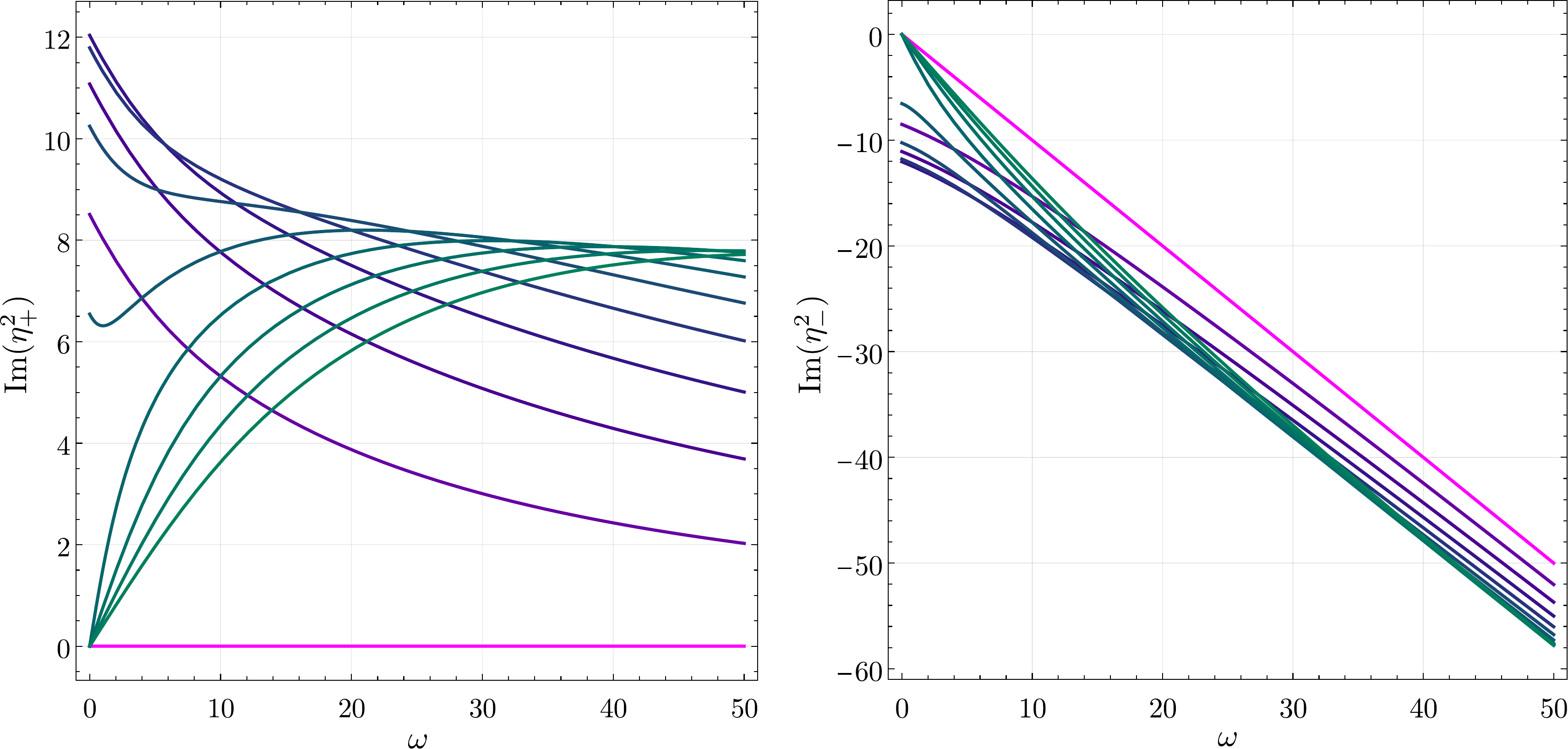}
\caption{Porous medium with $\tau = 2/3$: $\Im\!\qty(\eta^2)$ versus $\omega$ for the roots $\eta_{+}^2$ and $\eta_{-}^2$ with different values of $R$ ranging from $0$ to $50$ in steps of $5$ with the plot colour continuously changing from cyan $(R=0)$ to green}\label{fig7}
\end{figure}

\begin{figure}[t]%
\centering
\includegraphics[width=0.7\textwidth]{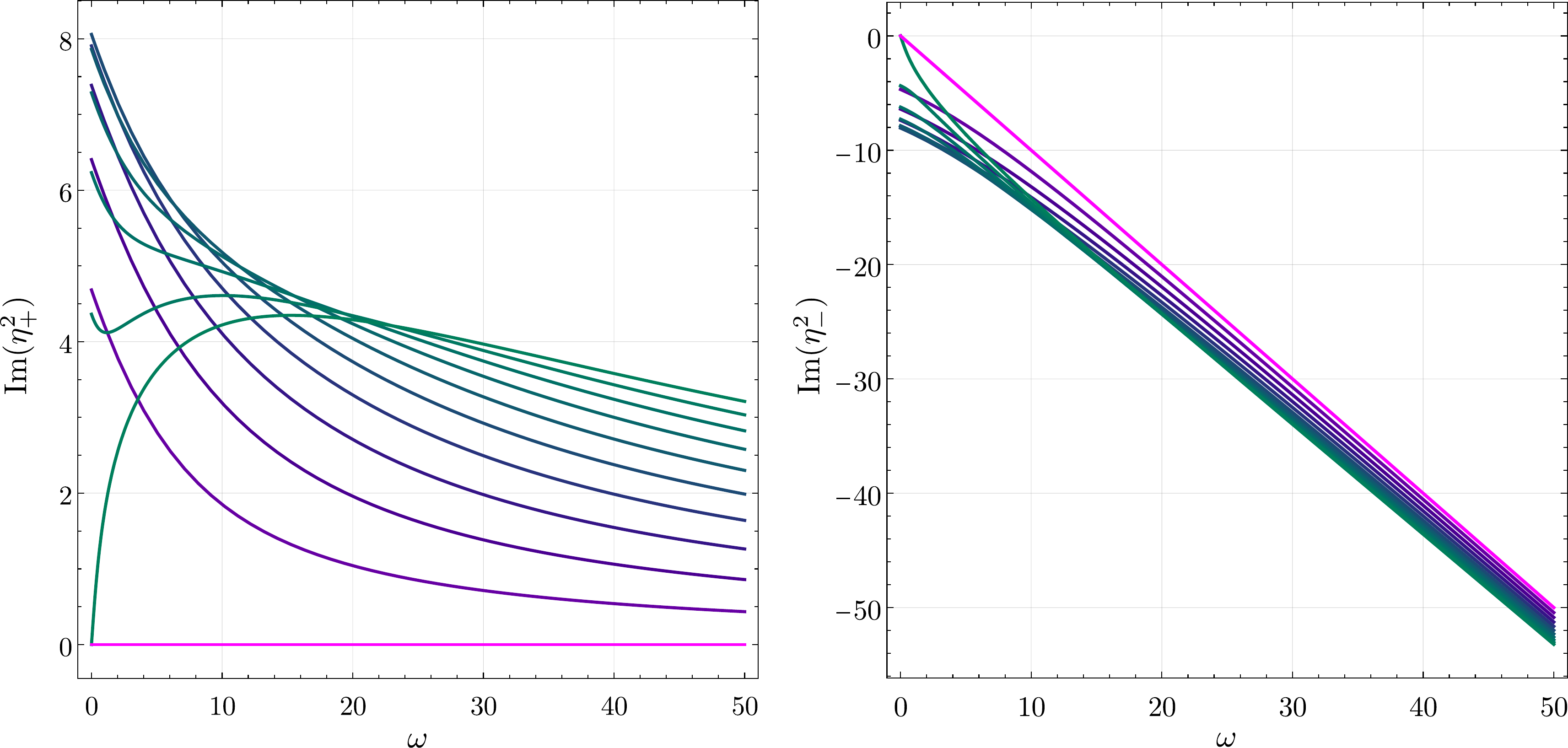}
\caption{Porous medium with $\tau = 3/2$: $\Im\!\qty(\eta^2)$ versus $\omega$ for the roots $\eta_{+}^2$ and $\eta_{-}^2$ with different values of $R$ ranging from $0$ to $50$ in steps of $5$ with the plot colour continuously changing from cyan $(R=0)$ to green}\label{fig8}
\end{figure}

We note that \equa{50} is endowed with the same invariance defined, for Burger's flow, by \equa{18}. This means that we can just focus on positive values of $\omega$, exactly as for Burger's flow.

\begin{figure}[t]%
\centering
\includegraphics[width=0.7\textwidth]{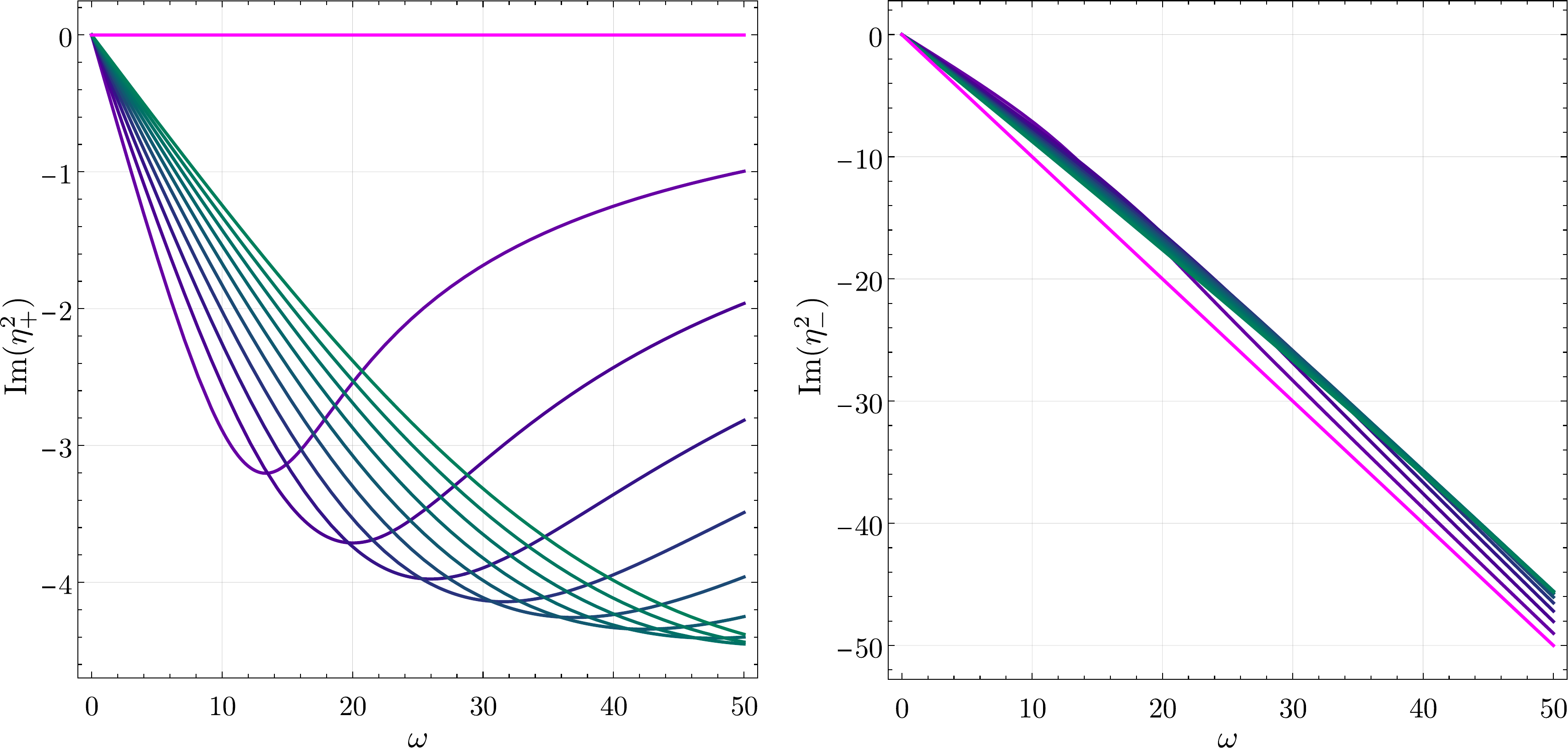}
\caption{Porous medium with $\tau = 1$: $\Im\!\qty(\eta^2)$ versus $\omega$ for the roots $\eta_{+}^2$ and $\eta_{-}^2$ with different values of $R$ ranging from $0$ to $-50$ in steps of $-5$ with the plot colour continuously changing from cyan $(R=0)$ to green}\label{fig9}
\end{figure}

\begin{figure}[h!]%
\centering
\includegraphics[width=0.7\textwidth]{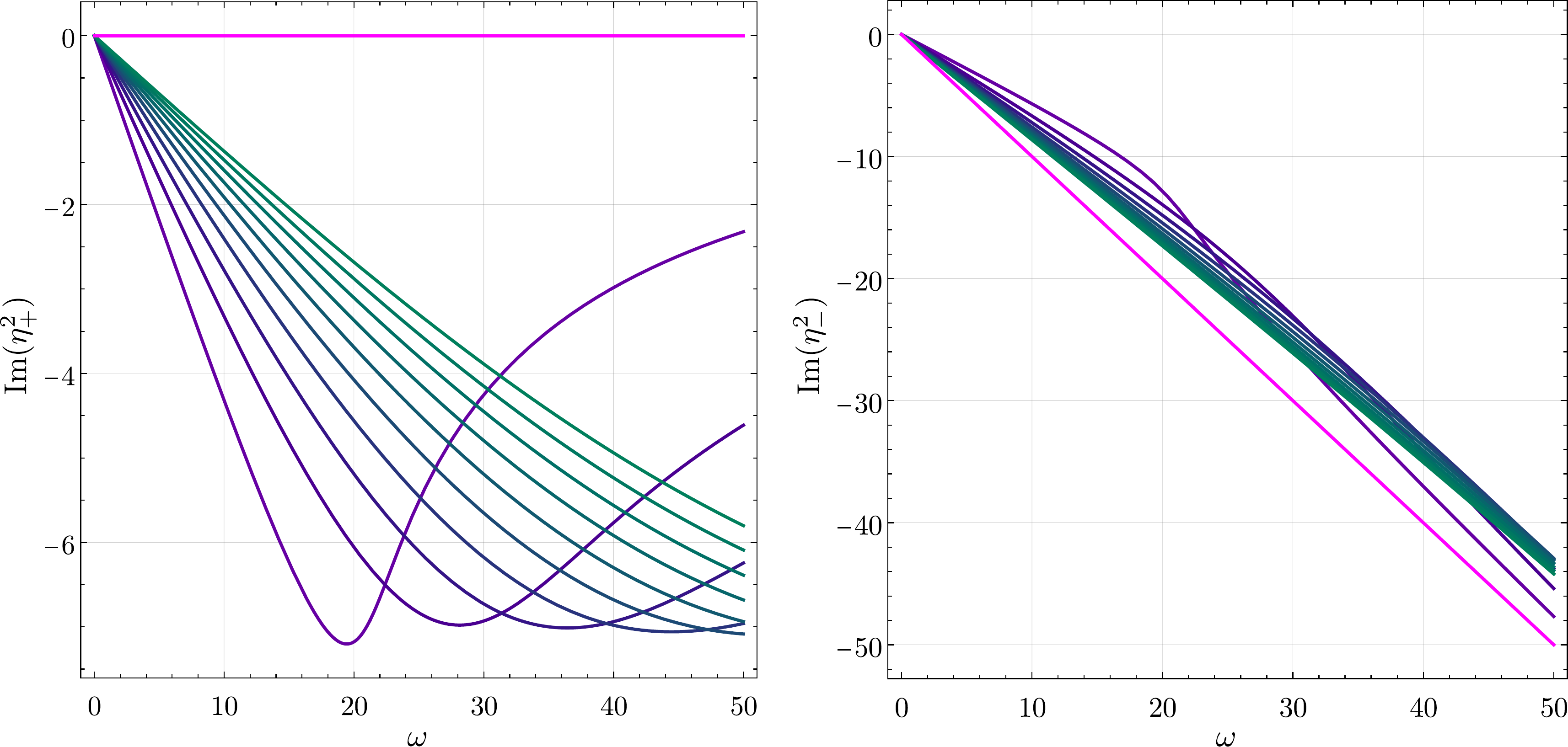}
\caption{Porous medium with $\tau = 2/3$: $\Im\!\qty(\eta^2)$ versus $\omega$ for the roots $\eta_{+}^2$ and $\eta_{-}^2$ with different values of $R$ ranging from $0$ to $-50$ in steps of $-5$ with the plot colour continuously changing from cyan $(R=0)$ to green}\label{fig10}
\end{figure}

\begin{figure}[h!]%
\centering
\includegraphics[width=0.7\textwidth]{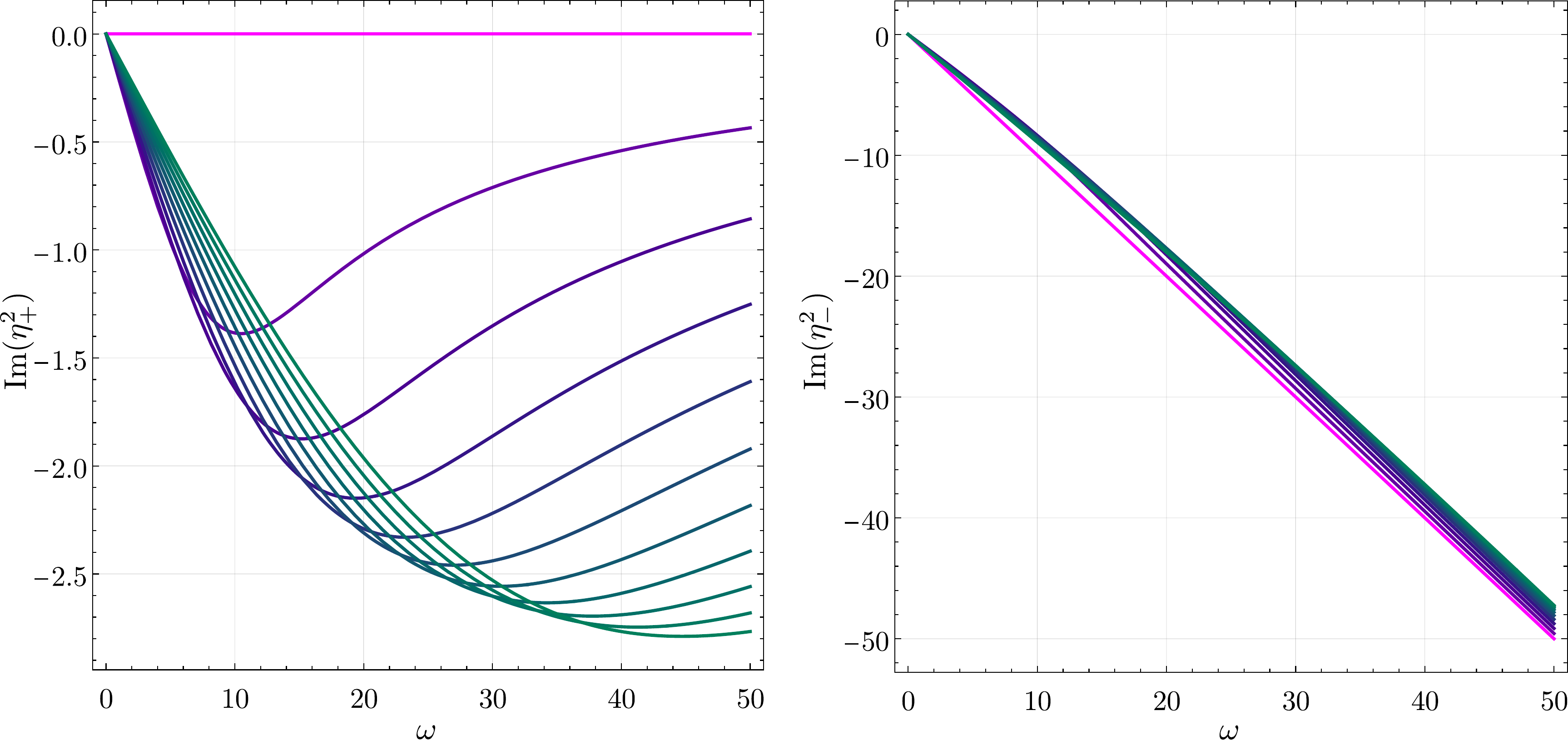}
\caption{Porous medium with $\tau = 3/2$: $\Im\!\qty(\eta^2)$ versus $\omega$ for the roots $\eta_{+}^2$ and $\eta_{-}^2$ with different values of $R$ ranging from $0$ to $-50$ in steps of $-5$ with the plot colour continuously changing from cyan $(R=0)$ to green}\label{fig11}
\end{figure}

Figures~\ref{fig6}-\ref{fig8} display plots of $\Im(\eta^2)$ versus $\omega$ for the two roots of \equa{50} and several values of $R$. These figures are relative to different anisotropy ratios $\tau$, with Fig.~\ref{fig6} describing the isotropy limiting case $(\tau = 1)$. The isotropic medium is a special case already investigated by \citet{barletta2021spatially}. Figure~\ref{fig7} is for $\tau = 2/3$, a case where the porous medium is more permeable in the vertical direction than in the horizontal direction. Figure~\ref{fig8} is for $\tau = 3/2$, {\em i.e.} the porous medium is more permeable in the horizontal direction than in the vertical direction. In Figs.~\ref{fig6}-\ref{fig8}, the cyan curves denote the limiting case where $R \to 0$. As it is easily gathered from \equa{50}, there are two possible roots for $R \to 0$: one is $\eta^2 = \pi^2/\tau$ leading to $\Im(\eta^2) = 0$, the other root yields $\Im(\eta^2) = - \omega$. The former root is not acceptable as it would lead to a vanishing denominator in \equa{47}. Then, the corresponding plot is reported just as a reference case in Figs.~\ref{fig6}-\ref{fig8}. 

The most important information disclosed by Figs.~\ref{fig6}-\ref{fig8} is that, in every case, spatially unstable modes exist for every $R > 0$. In fact, as pointed out in Remark~\ref{rem1}, spatially unstable modes are identified by the condition $\Im(\eta^2) > 0$. Having in mind \equasa{43}{44}, which provide a statement of the condition for temporal instability and reveal that such a condition is allowed only when $R > R_c = \pi ^2 (1 + \sqrt{\tau}\,)^2$, Figs.~\ref{fig6}-\ref{fig8} show that one may have $\Im(\eta^2) > 0$ also for $0 < R < R_c$. We have reached the conclusion that the spatial instability is not equivalent to the temporal instability.

The analysis carried out so far revealed that every setup having $R > 0$, may display modes satisfying the condition $\Im(\eta^2) > 0$ with $\omega > 0$ and, hence, the condition of spatial instability is satisfied for every case with heating from below, no matter how small is $R$. We also observed that $R =0$ yields either $\Im(\eta^2) = 0$ or $\Im(\eta^2) = - \omega$ and, thus, it does not show up any spatial instability. Heating from above, {\em i.e.} $R < 0$, is a parametric condition explicitly excluded in our initial description as we claimed $T_1 > T_2$ (see Fig.~\ref{fig5}), but considering $R < 0$ may offer some further insights anyway. 
%
%
In fact, Figs.~\ref{fig9}-\ref{fig11} display the plots of $\Im(\eta^2)$ versus $\omega$ for $R \le 0$ ranging from $0$ to $-50$. These figures are relative to the isotropic case $\tau = 1$ (Fig.~\ref{fig9}), $\tau = 2/3$ (Fig.~\ref{fig10}) and $\tau = 3/2$ (Fig.~\ref{fig11}). Exactly as for Figs.~\ref{fig6}-\ref{fig8}, all Figs.~\ref{fig9}-\ref{fig11} contain two frames: one for each root, either $\eta_{+}^2$ or $\eta_{-}^2$, of \equa{50}. For all the three cases, $\tau = 1$, $\tau = 2/3$ and $\tau = 3/2$, devised in Figs.~\ref{fig9}-\ref{fig11}, we have always $\Im(\eta^2) \le 0$ suggesting that no spatial instability is possible. 
%
However, as pointed out in Section~\ref{burflo} for the analysis of Burgers' flow, one has to be quite careful in the evaluation of the limiting case of stationary modes, $\omega \to 0$. 

It must be emphasised that, on account of \equa{50}, the limit for $\omega \to 0$ of $\Im(\eta^2)$ can be nonzero only if
\eqn{
\pi ^2 \qty(1 - \sqrt{\tau}\,)^2 < R < \pi ^2 \qty(1 + \sqrt{\tau}\,)^2 ,
}{52}
where the upper bound of the interval is precisely the critical value $R_c$ given by \equa{44}.
Consistently with \equa{52}, Fig.~\ref{fig6} displays $\Im(\eta^2) \to 0$ when $\omega \to 0$ only for $R = 0$ and $R \ge R_c \approx 39.5$, namely for $R = 0$, $40$, $45$ and $50$. Similarly, Fig.~\ref{fig7} displays $\Im(\eta^2) \to 0$ only for $R = 0$ and $R \ge R_c \approx 32.6$, namely for $R = 0$, $35$, $40$, $45$ and $50$, while Fig.~\ref{fig8} displays $\Im(\eta^2) = 0$ only for $R = 0$ and $R \ge R_c \approx 48.8$, that is for $R = 0$ and $50$. On the other hand, for all cases reported in Figs.~\ref{fig9}-\ref{fig11} relative to $R \le 0$, one has $\Im(\eta^2) \to 0$ when $\omega \to 0$. This is expected as a negative $R$ always lies outside the interval defined by \equa{52}. Following the same reasoning presented for the analysis of Burgers' flow, we may consider the cases where the limit $\omega \to 0$ leads to $k=0$, being $s$ and $k$ the real part and the imaginary part of $\eta$, as reported in \equa{14}. This situation undoubtedly means that  the limit $\omega \to 0$ yields $\Im(\eta^2) = 2 k s \to 0$. Hence, spatial modes with both $\omega=0$ and $k=0$ exist only when $R$ is outside the interval defined by \equa{52}. By employing \equa{48}, one can determine $s^2$ for such modes as
\eqn{
s^2 = \frac{- \qty[R - \pi^2 \qty(1 + \tau)] \pm \sqrt{\qty[R - \pi^2 \qty(1 + \tau)]^2 - 4 \pi^4 \tau}}{2 \tau}.
}{53}
The right hand side of \equa{53} is real and non-negative, with both $+$ and $-$, provided that
\eqn{
R \le \pi^2 \qty(1 - \sqrt{\tau})^2 .
}{54}
Therefore, we can have spatial modes with $\omega=0$ and $k=0$ with either a positive or a negative $s$ whenever \equa{54} is satisfied. Such modes are stationary, have $\Im(\eta^2) = 0$, but they display an exponential growth in space either in the positive or in the negative $x$ direction. Exactly as for Burgers' flow, we can consider these modes as a manifestation of spatial instability. The important point is that they do exist also for conditions of heating from above $(R  < 0)$. Figure~\ref{fig12} displays a scheme of the different instability types in the  $(\tau, R)$ parametric plane. The parametric region where growing spatial modes with $\omega=0$ and $k=0$ exist alongside spatially unstable modes with $\omega > 0$ and $\Im(\eta^2) > 0$ is labelled as ``spatial instability with $\omega \ge 0$'' for brevity. The parametric region $R < 0$ where growing spatial modes with $\omega=0$ and $k=0$ exist is labelled as ``spatial instability with $\omega = 0$''. 

\begin{figure}[t]%
\centering
\includegraphics[width=0.5\textwidth]{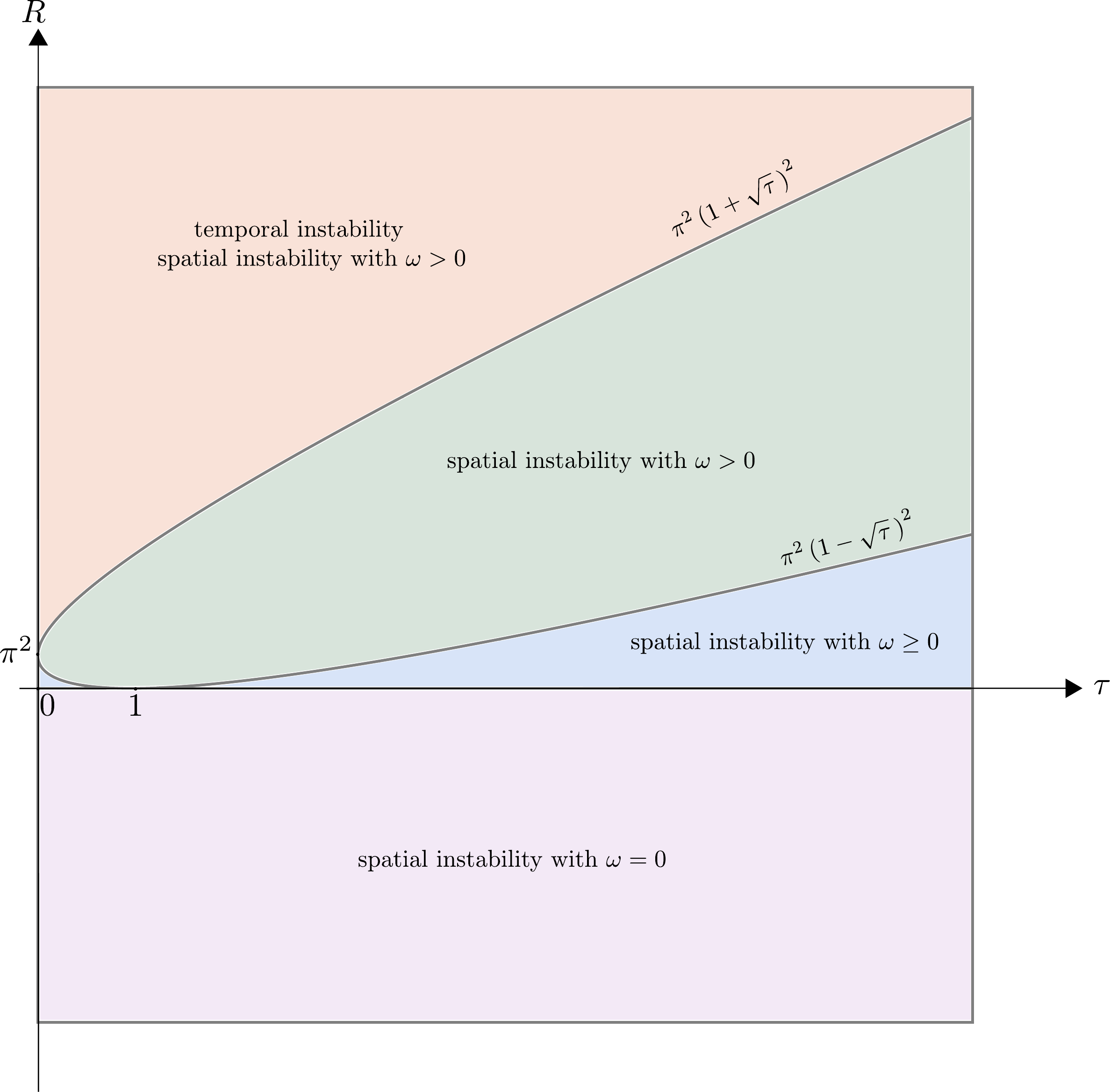}
\caption{Porous medium: instability diagram in the $(\tau, R)$ parametric plane}\label{fig12}
\end{figure}

\section{The Scope of the Spatial Instability Analysis}
The study of two test cases, a toy model based on Burgers' equation and a real-world system modelled as a fluid saturated porous medium with anisotropic permeability, have shown that the linear analysis of the perturbations within either a temporal instability concept or a spatial instability concept may lead to diverse scenarios. The temporal instability, in its modal formulation, allows one to evaluate the convective instability threshold bounding the unstable parametric region through the neutral stability curve and the critical values of the governing parameters. The spatial analysis, in its modal formulation, 
revealed conditions of linear instability for cases where the temporal analysis predicts linear stability. In the examples discussed so far, we found that the spatial instability may always exist either with the Fourier spatial modes having $\omega \ne 0$ or, at the very least, with the stationary spatial modes obtained in the limit $\omega \to 0$. Then, a question arises: does the linear analysis of spatial perturbation modes always lead to the prediction of instability no matter which values are prescribed for the parameters governing the system? The answer is negative.
%
%
%
In fact, a neat example is provided by the one-dimensional diffusion equation,
\eqn{
\pdv{u}{t} = \pdv[2]{u}{x} .
}{55}
An equilibrium state, $u = u_0$, is any stationary solution of \equa{55} and, hence, any linear function of $x$. If we introduce a perturbation of the equilibrium state, $u = u_0 + \varepsilon U$, the linear nature of \equa{55} leads to a governing equation for the perturbation function $U$ which coincides with \equa{55} where $u$ is now replaced by $U$. The convective instability analysis, carried out following the steps and the notations introduced in Section~\ref{temana}, leads us to the temporal dispersion relation,
\eqn{
\lambda(k) = - k^2.
}{56}
The well-known conclusion is utterly evident: every equilibrium state $u_0$ is stable as the real part of $\lambda$, and hence the time growth-rate, is negative for every $k \ne 0$. The spatial instability analysis can be developed according to the steps and notations introduced in Section~\ref{spatana}, leading us to the spatial dispersion relation,
\eqn{
\qty[\eta(\omega)]^2 = - i \omega.
}{57}
We have $\Im(\eta^2) = - \omega < 0$ for $\omega > 0$ so that, by adopting the criterion defined in Remark~\ref{rem1}, again we reach the conclusion that every equilibrium state $u_0$ is stable. Extending the spatial instability analysis to the limit of stationary modes, $\omega \to 0$, does not lead to any further insights. In fact, in this limit, \equa{57} just yields $k=0$ and $s=0$.

The diffusion equation example provides an illustration of the selective nature of the spatial instability analysis. Despite what we concluded for the two test cases developed in Sections~\ref{burflo} and \ref{anpola}, the spatial instability condition is not satisfied in every case.


%

\section{Conclusions and Final Remarks}
The onset of the linear instability in a flow system has been outlined by focussing on two diverse schemes termed temporal instability and spatial instability. If the former scheme is an exploitation of Lyapunov's idea of instability in a mechanical system, the latter scheme is a an utterly different approach where the roles of space and time are interchanged on defining the evolution variable for the development of the instability. Indeed, Lyapunov's or temporal instability can be described as the 
amplification in time of an initial perturbation superposed at $t = 0$ to the equilibrium state. Spatial instability can be described as the 
amplification in the $x$ spatial direction of an inlet perturbation localised at $x = 0$. Within a modal approach, the temporal instability analysis tests the reaction of the system to space-periodic perturbations with a given wavenumber $k$, while spatial instability analysis tests the reaction of the system to time-periodic perturbations with a given angular frequency $\omega$.

The comparison between the temporal instability analysis and the spatial instability analysis has been introduced via a simple flow system governed by a one-dimensional Burgers' equation with a driving linear force. The condition of spatial instability has been formulated through an inequality where the imaginary part of the square complex growth rate $\eta$ must be positive for positive values of $\omega$. Such an inequality reflects the condition of spatial instability as a spatial amplitude growth of the modal perturbation along the direction where the perturbation wave propagates. The study of Burger's flow leads to the conclusion that the domains of temporal instability and spatial instability are parametrically equivalent.

It  has been demonstrated that the equivalence between temporal instability and spatial instability is rather an exception than the rule, by developing the analysis of a Rayleigh-B\'enard system where an anisotropic porous layer saturated by a fluid is heated from below. A permeability anisotropy ratio $\tau$ and the Rayleigh number $R$ are the governing parameters for this flow system. It has been shown that, while the temporal instability can be started up only if the Rayleigh number exceeds its \mbox{$\tau$-dependent} critical value $R_c$, the spatial instability may exist not only when $R > R_c$, but also when $0 < R \le R_c$. This conclusion rules out, in this case, the equivalence between temporal instability and spatial instability proved for the case of Burgers' flow.

The special case where the Fourier modes employed for the spatial stability analysis have both a zero angular frequency and a zero wavenumber has been considered. Such modes are time-independent with an undefined direction of propagation. For the simple example of Burgers' flow, it has been shown that this class of modes may involve a nonzero spatial growth rate even in a parametric domain where the equilibrium solution is both temporally and spatially stable. However, these modes actually display an exponential growth in the positive or in the negative spatial direction. Such a growth can be intended as a form of spatial instability even if not associated to any direction of propagation. The same conclusion has been found also for the Rayleigh-B\'enard system in a saturated anisotropic porous layer. In the study of the saturated porous layer, these stationary Fourier modes with zero wavenumber yet growing in space exist also when the Rayleigh number is negative, {\em i.e.} for conditions of heating from above.

The temporal instability may be studied in a non-modal formulation by testing the time evolution of a perturbation wavepacket expressed via a Fourier integral. This variant of the temporal analysis leads to the definition of a typically supercritical condition called absolute instability \cite{barletta2019routes}. In principle, a similar non-modal conception could be conceived also for the spatial instability. Bringing the spatial instability beyond the purely modal analysis, as well as beyond the merely linear formulation,  could offer interesting opportunities for future research in the fluid mechanics of saturated porous media.

\section*{Acknowledgements}
The author expresses his deepest gratitude to the organising committee of the XXI International Conference on Waves and Stability in Continuous Media, WASCOM 2021 (Catania, Italy: June 6-10, 2022) for the kind invitation and for the financial support.
\\
 The results reported in this paper have been partially presented in WASCOM 2021.
 \\
The author also acknowledges financial support from Italian Ministry of Education, University and Research (MIUR), grant number PRIN 2017F7KZWS.



\begin{thebibliography}{10}
\expandafter\ifx\csname url\endcsname\relax
  \def\url#1{\texttt{#1}}\fi
\expandafter\ifx\csname urlprefix\endcsname\relax\def\urlprefix{URL }\fi

\bibitem[{B{\'e}nard(1900)}]{benard1900etude}
H.~B{\'e}nard, {\'E}tude exp{\'e}rimentale des courants de convection dans une
  nappe liquide. —{R\'e}gime permanent: tourbillons cellulaires, Journal de
  Physique Th{\'e}orique et Appliqu{\'e}e 9 (1900) 513--524.

\bibitem[{{\relax Lord Rayleigh}(1916)}]{rayleigh1916lix}
{\relax Lord Rayleigh}, {LIX}. {O}n convection currents in a horizontal layer
  of fluid, when the higher temperature is on the under side, The London,
  Edinburgh, and Dublin Philosophical Magazine and Journal of Science 32 (1916)
  529--546.

\bibitem[{Rees(2000)}]{Rees2000}
D.~A.~S. Rees, {The stability of {Darcy--B{\'e}nard} convection}, in: K.~Vafai,
  H.~A. Hadim (Eds.), {Handbook of Porous Media}, CRC Press, New York, 2000,
  pp. 521--558.

\bibitem[{Drazin and Reid(2004)}]{drazin2004hydrodynamic}
P.~G. Drazin, W.~H. Reid, Hydrodynamic Stability, Cambridge University Press,
  Cambridge, UK, 2004.

\bibitem[{Straughan(2004)}]{straughan2004energy}
B.~Straughan, The Energy Method, Stability, and Nonlinear Convection, 2nd
  Edition, Springer, New York, NY, 2004.

\bibitem[{Straughan(2008)}]{Straughan}
B.~Straughan, Stability and Wave Motion in Porous Media, Springer, New York,
  NY, 2008.

\bibitem[{Betchov and {Criminale~Jr}(1966)}]{betchov1966spatial}
R.~Betchov, W.~O. {Criminale~Jr}, Spatial instability of the inviscid jet and
  wake, Physics of Fluids 9 (1966) 359--362.

\bibitem[{Keller et~al.(1973)Keller, Rubinow, and Tu}]{keller1973spatial}
J.~B. Keller, S.~I. Rubinow, Y.~O. Tu, Spatial instability of a jet, Physics of
  Fluids 16 (1973) 2052--2055.

\bibitem[{Alves et~al.(2007)Alves, Kelly, and Karagozian}]{alves2007local}
L.~S.~{\relax ~de~B}. Alves, R.~E. Kelly, A.~R. Karagozian, Local stability
  analysis of an inviscid transverse jet, Journal of Fluid Mechanics 581 (2007)
  401--418.

\bibitem[{Afzaal et~al.(2015)Afzaal, Uddin, Alsharif, and
  Mohsin}]{afzaal2015temporal}
M.~F. Afzaal, J.~Uddin, A.~M. Alsharif, M.~Mohsin, Temporal and spatial
  instability of a compound jet in a surrounding gas, Physics of Fluids 27~(4)
  (2015) 044106.

\bibitem[{Schmid and Henningson(2012)}]{schmid2012stability}
P.~J. Schmid, D.~S. Henningson, Stability and Transition in Shear Flows,
  Springer, New York, 2012.

\bibitem[{Barletta(2021)}]{barletta2021spatially}
A.~Barletta, Spatially developing modes: The {D}arcy--{B\'e}nard problem
  revisited, Physics 3 (2021) 549--562.

\bibitem[{Barletta(2021)}]{barletta2021time}
A.~Barletta, Time-evolving to space-evolving {R}ayleigh-{B\'e}nard instability
  of a horizontal porous medium flow, Physics of Fluids 33 (2021) 124106.

\bibitem[{Barletta(2019)}]{barletta2019routes}
A.~Barletta, Routes to Absolute Instability in Porous Media, Springer, New
  York, NY, 2019.

\bibitem[{Burgers(1948)}]{burgers1948mathematical}
J.~M. Burgers, A mathematical model illustrating the theory of turbulence,
  Advances in Applied Mechanics 1 (1948) 171--199.

\bibitem[{Nield and Bejan(2017)}]{NieldBejan2017}
D.~A. Nield, A.~Bejan, {Convection in Porous Media}, 5th Edition, Springer, New
  York, NY, 2017.

\bibitem[{Castinel and Combarnous(1974)}]{castinel1974critere}
G.~Castinel, M.~Combarnous, Crit{\`e}re d’apparition de la convection
  naturelle dans une couche poreuse anisotrope horizontale, Comptes Rendus de
  l'Acad\'emie des Sciences S\'erie B 287 (1974) 701--704.

\bibitem[{Rees and Postelnicu(2001)}]{ReesPostelnicu2001}
D.~A.~S. Rees, A.~Postelnicu, The onset of convection in an inclined
  anisotropic porous layer, International Journal of Heat and Mass Transfer 44
  (2001) 4127--4138.

\bibitem[{Capone et~al.(2009)Capone, Gentile, and Hill}]{capone2009anisotropy}
F.~Capone, M.~Gentile, A.~A. Hill, Anisotropy and symmetry in porous media
  convection, Acta Mechanica 208 (2009) 205--214.

\bibitem[{Straughan(2019)}]{straughan2019anisotropic}
B.~Straughan, Anisotropic bidispersive convection, Proceedings of the Royal
  Society A 475 (2019) 20190206.

\bibitem[{Celli and Barletta(2022)}]{celli2022effects}
M.~Celli, A.~Barletta, Effects of anisotropy on the transition to absolute
  instability in a porous medium heated from below, Physics of Fluids 34 (2022)
  024105.

\end{thebibliography}

\providecommand{\BIBde}{de~B}

\end{document}